\documentclass[pra,twocolumn,superscriptaddress,10pt]{revtex4-1}
\usepackage{amsmath,amssymb,amsthm,mathtools}
\usepackage{bm}
\usepackage[colorlinks=true,citecolor=blue,urlcolor=blue]{hyperref}

\newtheorem{theorem}{Theorem}

\newtheorem{lemma}{Lemma}

\newcommand{\EF}{E_{\mathrm F}}
\newcommand{\C}{\mathcal C}
\newcommand{\supp}{\operatorname{supp}}
\newcommand{\Rad}{\operatorname{Rad}}
\newcommand{\tr}{\operatorname{Tr}}
\newcommand{\T}{\mathsf T}

\begin{document}

\title{Supporting functionals and singular boundary geometry of two-qubit entanglement of formation}
\author{Wei Song}\email{wsong1@mail.ustc.edu.cn}
\affiliation{School of Physics and Materials Engineering, Hefei Normal University,
Hefei 230601, China}

\author{Xiao-Lan Zong}\email{zxl@hfnu.edu.cn}
\affiliation{School of Physics and Materials Engineering, Hefei Normal University,
Hefei 230601, China}
\date{\today}

\begin{abstract}
We give a criterion for the existence of a global supporting affine
functional for the entanglement of formation at an entangled
two-qubit state. It exists if and only if the spin-flip bilinear form
is nondegenerate on the support of the state, or equivalently, if the
last concurrence root associated with the support is nonzero. In this
case the entanglement of formation is locally Lipschitz continuous.
In the degenerate case the state has rank two or three, and its
projective support is tangent to the product-state quadric. For suitable
perturbations outside the support, the entanglement of formation decreases
as the square root of the mixing parameter. The leading term is obtained
explicitly. This square-root decrease violates the one-sided Lipschitz
bound required for a global supporting affine functional.
\end{abstract}

\maketitle

\section{Introduction}

Entanglement of formation is defined as the minimum average pure-state
entanglement over all decompositions of a bipartite mixed state
\cite{Bennett1996}. The optimization over pure-state decompositions is
generally difficult. Two-qubit systems are an exception: following the
introduction of concurrence by Hill and Wootters \cite{Hill1997},
Wootters derived an analytical formula for the entanglement of formation
of an arbitrary two-qubit state \cite{Wootters1998}.

For rank-two states, the zeros of the entanglement measure in the
support play an important role in the convex-roof construction.
Osterloh, Siewert, and Uhlmann introduced the zero polytope for
polynomial entanglement measures and used characteristic curves to
obtain lower bounds for the corresponding convex roofs
\cite{OsterlohSiewertUhlmann2008}. Regula and Adesso proved that, when
the support contains a unique pure state at which the measure vanishes,
the average of a degree-two homogeneous entanglement measure is
independent of the pure-state decomposition \cite{RegulaAdesso2016PRL}.
Using the positions of the zeros on the Bloch sphere, they also
obtained a geometric description of concurrence and analytical
convex-roof results for supports containing one or two distinct zeros
\cite{RegulaAdesso2016PRA}. Osterloh used intersections with the zero
polytope to determine where entanglement vanishes in rank-two mixtures
\cite{Osterloh2016}. For two qubits, a pure state has zero concurrence
if and only if it is a product state. The one-root condition thus means
that the support contains a unique product ray, corresponding to a
projective line tangent to the product-state quadric.

We consider the problem of existence of a global supporting affine
functional for the entanglement of formation at a given state. Such a
functional is an affine lower bound for the entanglement of formation
which is attained at the reference state. For full-rank states its
existence follows from standard results of convex analysis. For
degenerate states this is no longer guaranteed. Holevo and Shirokov
recently showed that the existence of a global supporting affine
functional is equivalent to a one-sided Lipschitz bound for the
entanglement of formation and gave a rank-two two-qubit state for which
this bound fails \cite{HolevoShirokov2026}. It remains to determine
which boundary states admit a global supporting affine functional.

Previous work on zeros and convex roofs has mainly been restricted to
a fixed support. This is not enough for the present problem. A global
supporting affine functional must also be valid for states whose
supports are not contained in that of the reference state.
Such perturbations may change the behavior of the entanglement of
formation even when its convex roof is simple on the original support.
The one-root states give an example. Their average concurrence is the
same for all pure-state decompositions within the support, but this
property does not imply the existence of a global supporting affine
functional for the entanglement of formation.

We obtain a necessary and sufficient condition for the existence of a
global supporting affine functional for the entanglement of formation
at an entangled two-qubit state. Such a functional exists exactly when
the spin-flip bilinear form is nondegenerate on the support of the
state. In rank two, the degenerate case is the one-root class. In rank
three, degeneracy occurs exactly when the kernel of the state is
spanned by a product vector. For every degenerate state we construct a
perturbation outside the support for which the entanglement of formation 
decreases as the square root of the mixing parameter. The leading coefficient 
is obtained explicitly and is invariant under local unitary transformations. 
In the nondegenerate case we prove a Lipschitz bound for all sufficiently
small perturbations of the state. The two arguments are based on the
same block unitary-congruence lemma.

\section{Concurrence and the support form}

Let \(\mathcal H=\mathbb C^2\otimes\mathbb C^2\) and fix the computational product basis in \(\mathcal H\). We use the
spin-flip antiunitary
\begin{equation}
 \Theta=JK,\qquad J=\sigma_y\otimes\sigma_y,
 \label{eq:theta}
\end{equation}
where \(K\) is complex conjugation in this basis. Since \(J\) is real,
symmetric, and unitary, one has \(\Theta^2=I\).

Define the symmetric complex bilinear form
\begin{equation}
 \beta(x,y)=x^\T Jy.
 \label{eq:beta}
\end{equation}

Moreover, \(\langle x|\Theta y\rangle=\overline{\beta(x,y)}\). If \(x=\sum_{i,j=0}^1 x_{ij}|ij\rangle\)
and \(X=(x_{ij})\) is the corresponding \(2\times2\) coefficient matrix, then
\begin{equation}
 \beta(x,x)=-2\det X.
 \label{eq:productquadric}
\end{equation}

Thus \(\beta(x,x)=0\) exactly for product vectors \(x\). The product vectors form the Segre quadric
\(Q=\{[x]\in\mathbb P(\mathcal H):\beta(x,x)=0\}\).
For a state \(\rho\), write \(\widetilde\rho=J\overline{\rho}J\). Let \(\lambda_1\geq\lambda_2\geq\lambda_3\geq\lambda_4\geq0\) be the square roots of the eigenvalues of \(\rho\widetilde\rho\),
counted with multiplicity. Wootters' formula \cite{Wootters1998}
takes the form
\begin{align}
 \C(\rho)
 &=\max\{0,\lambda_1-\lambda_2-\lambda_3-\lambda_4\},
 \label{eq:concurrence}\\
 \EF(\rho)
 &=\mathcal E(\C(\rho)),
 \qquad
 \mathcal E(c)
 =h_2\!\left(\frac{1+\sqrt{1-c^2}}{2}\right),
 \label{eq:eof}
\end{align}
where \(h_2(p)=-p\log_2p-(1-p)\log_2(1-p)\). The function \(\mathcal E\) is smooth and strictly increasing on
\((0,1)\). In particular,
\begin{equation}
 \mathcal E'(c)>0,\qquad 0<c<1,
 \qquad
 \mathcal E'(1^-)=\frac{1}{\ln2}.
 \label{eq:Ederivative}
\end{equation}

Take a four-column factor \(X\) of \(\rho\) satisfying \(\rho=XX^\dagger\), with zero columns added when necessary. Consider the complex symmetric matrix
\begin{equation}
 T(X)=X^\T JX.
 \label{eq:takagimatrix}
\end{equation}

Its singular values are precisely \(\lambda_1,\ldots,\lambda_4\). Indeed, \(T(X)^\dagger T(X)=X^\dagger\widetilde\rho X\), and \(X^\dagger\widetilde\rho X\) and \(XX^\dagger\widetilde\rho=\rho\widetilde\rho\) have the same nonzero eigenvalues. If \(X'\) is another four-column factor of \(\rho\), then \(X'=XU\) for some unitary \(U\), and hence \(T(X')=U^\T T(X)U\). Thus the singular values of \(T(X)\) depend only on \(\rho\).

Let \(S=\supp\rho\) and \(r=\dim S\). The restriction of \(\beta\) to \(S\) has radical
\begin{equation}
 \Rad(\beta|_S)
 =
 \{z\in S:\beta(z,y)=0\ \text{for all }y\in S\}.
 \label{eq:rad}
\end{equation}

Choose a full-column-rank operator \(X_S:\mathbb C^r\longrightarrow S\) such that \(\rho=X_SX_S^\dagger\).
In the basis of \(S\) formed by the columns of \(X_S\), the form
\(\beta|_S\) is represented by \(X_S^\T JX_S\). Hence its singular values are \(\lambda_1(\rho),\ldots,\lambda_r(\rho)\)
while the remaining concurrence roots vanish. Therefore,
\begin{equation}
 \Rad(\beta|_S)=\{0\}
 \quad\Longleftrightarrow\quad
 \lambda_r(\rho)>0.
 \label{eq:rankroots}
\end{equation}

Here \(\lambda_r(\rho)\) denotes the \(r\)th concurrence root and not an
eigenvalue of \(\rho\). A functional \(\sigma\longmapsto\tr(\Lambda\sigma)\), 
with \(\Lambda\) Hermitian, is called supporting for \(\EF\) at \(\rho\) if
\begin{equation}
 \tr(\Lambda\rho)=\EF(\rho)
 \quad\text{and}\quad
 \tr(\Lambda\sigma)\leq\EF(\sigma)
 \label{eq:supportdef}
\end{equation}
for all two-qubit states \(\sigma\). By the finite-dimensional form of
the Holevo--Shirokov criterion
\cite[Theorem 1]{HolevoShirokov2026}, such a functional exists if and
only if there is a finite constant \(L_\rho\) such that
\begin{equation}
 \EF(\rho)-\EF(\sigma)
 \leq
 L_\rho\|\rho-\sigma\|_1
 \qquad
 \text{for every state }\sigma.
 \label{eq:lowerlip}
\end{equation}

It is enough to establish \eqref{eq:lowerlip} in a neighborhood of
\(\rho\). Indeed, if the inequality holds whenever
\(\|\rho-\sigma\|_1<\varepsilon\), then \(0\leq\EF\leq1\) shows that,
after replacing \(L_\rho\) by \(\max\{L_\rho,\varepsilon^{-1}\}\), it holds for all states.

Thus, only a one-sided Lipschitz bound for the decrease of \(\EF\) is
required. No analogous bound for its increase is needed.

\section{Main result}

\begin{theorem}[Complete classification on the entangled set]
\label{thm:main}
Let \(\rho\) be an entangled two-qubit state with support
\(S=\supp\rho\) and rank \(r=\dim S\). A global supporting affine
functional for \(\EF\) exists at \(\rho\) if and only if
\(\beta|_S\) is nondegenerate, or equivalently,
\(\lambda_r(\rho)>0\). In this case, there are constants
\(L_\rho>0\) and \(\varepsilon_\rho>0\) such that
\begin{equation}
 |\EF(\sigma)-\EF(\rho)|
 \leq L_\rho\|\sigma-\rho\|_1
 \label{eq:pointwiselip}
\end{equation}
for every two-qubit state \(\sigma\) satisfying
\(\|\sigma-\rho\|_1<\varepsilon_\rho\).

If \(\beta|_S\) is degenerate, then \(r=2\) or \(r=3\), and its
radical is one-dimensional. Let \(z\) be a unit vector spanning
\(\Rad(\beta|_S)\), set \(w=\Theta z\), and denote by
\(\rho_S^{-1}\) the inverse of \(\rho\) on \(S\).
Then \(w\in S^\perp\). Define
\begin{equation}
 \kappa_\rho=
 \bigl\langle z\big|\rho_S^{-1}\big|z\bigr\rangle^{-1/2}>0.
 \label{eq:kappa}
\end{equation}

Along the path \(\rho_t=(1-t)\rho+t|w\rangle\langle w|\),
we have, as \(t\to0^+\),
\begin{align}
 \C(\rho_t)
 &=\C(\rho)-2\kappa_\rho\sqrt t+O(t),
 \label{eq:cuspC}\\
 \EF(\rho)-\EF(\rho_t)
 &=2\mathcal E'(\C(\rho))\kappa_\rho\sqrt t+O(t).
 \label{eq:cuspE}
\end{align}

The coefficient of the square-root term in Eq.~\eqref{eq:cuspE}
is strictly positive and unchanged by local unitary transformations.

At rank two, degeneracy is equivalent to \(\mathbb P(S)\) being
a line tangent to \(Q\). At rank three, it is equivalent to
\(\mathbb P(S)\) being a tangent plane to \(Q\), or to
\(\ker\rho\) being spanned by a product vector.
For entangled states of rank one and rank four, the restriction is
nondegenerate.
\end{theorem}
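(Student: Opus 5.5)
The plan is to reduce everything to the singular values of the complex symmetric matrix \(T(Y)=Y^\T JY\) for a well-chosen four-column factor \(Y\) of the perturbed state. Both directions rest on one block unitary-congruence lemma. Consider a complex symmetric matrix
\begin{equation*}
 \begin{pmatrix} M & N\\ N^\T & P\end{pmatrix}
\end{equation*}
with \(M\) of size \(r\), \(\|N\|=O(\eta)\) and \(\|P\|=O(\eta^2)\).

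(i) If \(M\) is invertible, its singular values agree with those of \(M\), padded by zeros, up to \(O(\eta^2)\). To see this, congruence by \(\bigl(\begin{smallmatrix} I & -M^{-1}N\\ 0 & I\end{smallmatrix}\bigr)\) leaves \(M\oplus(P-N^\T M^{-1}N)\) up to an \(O(\eta^2)\) error. The transformation is not unitary, but it is \(I+O(\eta)\). The resulting Takagi/singular-value shifts can be controlled by applying Weyl's inequality to \(T^\dagger T\), or by a polar correction.

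(ii) If \(M\) has one-dimensional kernel spanned by a unit vector \(\hat a\) (so \(\hat a^\T M=0\)), first bring \(M\) to Takagi form \(\operatorname{diag}(\mu_1,\dots,\mu_{r-1},0)\) by a unitary congruence. Then (i) applies to the invertible \((r-1)\)-block. What remains is a \(2\times2\) block \(\bigl(\begin{smallmatrix}0 & c\\ c & O(\eta^2)\end{smallmatrix}\bigr)+O(\eta^2)\) with \(c=\hat a^\T N\). Its two singular values are \(|c|+O(\eta^2)\) each.

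The hardest step is to prove (ii) with uniform constants, and in particular to show that the two small singular values are each \(|c|\) to order \(\eta^2\) rather than merely \(|c|+O(\eta)\). I would attack this by computing \(T^\dagger T\) directly and using the \(2\times2\) Schur complement.

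\emph{Nondegenerate case.} Write \(\sigma=\rho+\Delta\) in block form relative to \(S\oplus S^\perp\), as \(\sigma=\bigl(\begin{smallmatrix}A&B\\B^\dagger&D\end{smallmatrix}\bigr)\) with \(A=\rho_S+O(\delta)\) and \(\delta=\|\Delta\|_1\). Take the Cholesky-type factor
\begin{equation*}
 Y=\begin{pmatrix}A^{1/2} & 0\\ B^\dagger A^{-1/2} & (D-B^\dagger A^{-1}B)^{1/2}\end{pmatrix}.
\end{equation*}
Here \(A^{1/2}\) is Lipschitz near the positive definite \(\rho_S\), the lower-left block is \(O(\delta)\), and the Schur-complement root is \(O(\delta^{1/2})\). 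Expressed in the columns of \(X_S\), the matrix \(T(Y)\) then has top block \(M=X_S^\T JX_S+O(\delta)\), off-diagonal block \(O(\delta^{1/2})\) and bottom block \(O(\delta)\), so the lemma applies with \(\eta=\delta^{1/2}\). Since \(\lambda_r(\rho)>0\), \(M\) is invertible, and part (i) gives \(|\lambda_i(\sigma)-\lambda_i(\rho)|=O(\delta)\). Hence \(|\C(\sigma)-\C(\rho)|=O(\delta)\). Because \(\rho\) is entangled, \(\C(\rho)>0\), and \(\mathcal E\) is Lipschitz on \([\C(\rho)/2,1]\) by \eqref{eq:Ederivative}. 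This yields \eqref{eq:pointwiselip}, and the Holevo--Shirokov criterion \eqref{eq:lowerlip} gives the supporting functional.

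\emph{Degenerate case.} The radical \(S\cap S^{\perp_\beta}\) has dimension at most \(\min(r,4-r)\), since \(\beta\) is nondegenerate on \(\mathcal H\). Thus ranks one and four are automatically nondegenerate, where rank one uses \(\beta(x,x)\neq0\) for entangled \(x\). For \(r=2\), a two-dimensional radical would make \(S\) totally isotropic, so \(S\) would consist of product vectors and \(\rho\) would be separable. Hence the radical is one-dimensional. Next, \(\langle y|\Theta z\rangle=\overline{\beta(y,z)}=0\) for all \(y\in S\), which gives \(w\in S^\perp\). Also \(\beta(w,w)=\overline{\beta(z,z)}=0\).

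For the path \(\rho_t\), use the factor \(Y=[\sqrt{1-t}\,X_S,\ \sqrt t\,w]\). Then
\begin{equation*}
 T(Y)=\begin{pmatrix}(1-t)M & \sqrt{t(1-t)}\,v\\ \sqrt{t(1-t)}\,v^\T & 0\end{pmatrix},
 \qquad v=X_S^\T Jw=X_S^\T\bar z .
\end{equation*}
With \(a=X_S^{-1}z\) we have \(Ma=0\) and \(|a|^2=\langle z|\rho_S^{-1}|z\rangle\). The coupling in part (ii) is \(c=\hat a^\T v=z^\T\bar z/|a|=\kappa_\rho\). The remaining roots are \(\lambda_1(\rho),\dots,\lambda_{r-1}(\rho)\), each shifted by \(O(t)\). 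Since \(\C(\rho)>0\), \(\lambda_1\) stays largest, and therefore
\begin{equation*}
 \C(\rho_t)=\C(\rho)-2\kappa_\rho\sqrt t+O(t).
\end{equation*}
Composing with \(\mathcal E\), which is \(C^1\) near \(\C(\rho)\), gives \eqref{eq:cuspE}. Because \(\|\rho_t-\rho\|_1\le2t\), this \(\sqrt t\) decrease violates \eqref{eq:lowerlip}, so no supporting functional exists.

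Local-unitary invariance holds because \(\Theta\) commutes with \(U\otimes V\) up to a phase, so \(z\mapsto(U\otimes V)z\) and \(\rho_S^{-1}\) is conjugated accordingly. Finally, the geometric statements follow because a one-dimensional radical means \(\mathbb P(S)\) meets \(Q\) with multiplicity at a point where \(S\subset T_{[z]}Q\), that is, tangency. For \(r=3\), the condition \(S=z^{\perp_\beta}\) is equivalent to \(\ker\rho=S^\perp=\mathbb C\,\Theta z\), and \(\Theta z\) is a product vector.
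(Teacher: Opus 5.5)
\paragraph{Gap: the block lemma.} Your overall route matches the paper's: the Schur-complement factor, reduction to the singular values of $T(Y)$, the factor $(\sqrt{1-t}\,X_S,\sqrt t\,w)$ with coupling $\kappa_\rho$, the radical-dimension count, the geometry and local-unitary invariance. The gap is in the block lemma, and it is exactly the part both directions rely on.

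You need the \emph{large} roots, i.e.\ the nonzero concurrence roots of $\rho$, to move only by $O(\eta^2)$ under an $O(\eta)$ off-diagonal coupling. Your argument does not give this.
\begin{itemize}
\item The shear $E=\bigl(\begin{smallmatrix} I&-M^{-1}N\\0&I\end{smallmatrix}\bigr)$ is $I+O(\eta)$ and not unitary. It only yields $s_j(T)=s_j(E^{\T}TE)\,(1+O(\eta))$, which is an additive $O(\eta)$ error on the large roots.
\item Weyl's inequality applied to $T^\dagger T$ gives the same $O(\eta)$, because the off-diagonal block of $T^\dagger T$ is $\overline M N+\dots=O(\eta)$.
\item The natural polar correction, the unitary factor of $E$, equals $I+\tfrac{\eta}{2}\bigl(\begin{smallmatrix}0&Y\\-Y^\dagger&0\end{smallmatrix}\bigr)+O(\eta^2)$ with $\eta Y=-M^{-1}N$. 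It cancels only half of $N$.
\end{itemize}

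With only $O(\eta)$ control, each case breaks:
\begin{itemize}
\item In the nondegenerate case you get $|\C(\sigma)-\C(\rho)|=O(\sqrt\delta)$, which is not a Lipschitz bound.
\item In the degenerate case $\lambda_1(\rho_t)$, and $\lambda_2(\rho_t)$ when $r=3$, is known only up to $O(\sqrt t)$. That error is of the same order as the $-2\kappa_\rho\sqrt t$ term, so neither the cusp formula nor the blow-up of the ratio follows.
\end{itemize}

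You have also placed the difficulty in the wrong spot. The two small roots in (ii) are already $O(\eta)$, so any near-identity congruence changes them by only $O(\eta^2)$. Finally, (i) assumes $\|P\|=O(\eta^2)$, but after your Takagi step in (ii) the residual block is $\sqrt{t(1-t)}\,\kappa_\rho\bigl(\begin{smallmatrix}0&1\\1&0\end{smallmatrix}\bigr)$, which is of order $\eta$. So (i) does not apply there as stated.

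\paragraph{The missing idea.} You need a \emph{unitary} congruence that removes the whole coupling and tolerates an $O(\varepsilon)$ lower block. The paper uses
\[
W_\varepsilon=\exp(\varepsilon H_\varepsilon),\qquad H_\varepsilon=\bigl(\begin{smallmatrix}0&Y_\varepsilon\\-Y_\varepsilon^\dagger&0\end{smallmatrix}\bigr),\qquad Y_\varepsilon=-A^{-1}B_\varepsilon .
\]
Expand around $M_0=A\oplus 0$, which is legitimate because $M_\varepsilon-M_0=O(\varepsilon)$ even with the $\varepsilon C$ block present. The first-order term $\varepsilon(H_\varepsilon^{\T}M_0+M_0H_\varepsilon)$ is the full $-\varepsilon B_\varepsilon$ off-diagonal block. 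The $-Y_\varepsilon^\dagger$ entry is annihilated by the zero block of $M_0$, so it does not touch $\varepsilon C$ at this order.

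This gives $\|W_\varepsilon^{\T}M_\varepsilon W_\varepsilon-(A\oplus\varepsilon C)\|=O(\varepsilon^2)$. A single bound $|s_j(P)-s_j(Q)|\le\|P-Q\|$ then controls both clusters to $O(\varepsilon^2)$ at once. The effective block $C_{\mathrm{eff}}$ supplies the two roots $\kappa_\rho\sqrt t+O(t)$.

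An alternative repair is to keep your shear for the small cluster only, and use a gap-dependent quadratic eigenvalue bound (not Weyl) on $T^\dagger T$ for the large cluster.
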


For a separable state, the zero functional is a supporting functional
for \(\EF\), because \(\EF\geq0\) and \(\EF=0\) at that state. Thus no
further condition is needed. We shall use the following lemma.

\section{A block lemma for the concurrence roots}

We will use the following lemma to estimate the concurrence roots
near zero. It applies without a differentiable dependence on the
perturbation parameter or a smooth choice of Takagi vectors.

\begin{lemma}[Unitary-congruence block reduction]
\label{lem:block}
Let \(M_\varepsilon\) be a complex symmetric matrix with a fixed
block decomposition
\begin{equation}
 M_\varepsilon=
 \begin{pmatrix}
 A+O(\varepsilon^2)&\varepsilon B_\varepsilon+O(\varepsilon^2)\\
 \varepsilon B_\varepsilon^\T+O(\varepsilon^2)&
 \varepsilon C+O(\varepsilon^2)
 \end{pmatrix},
 \label{eq:blockhyp}
\end{equation}
where \(A=A^\T\in\mathbb C^{m\times m}\) is nonsingular,
\(C=C^\T\in\mathbb C^{n\times n}\), and both \(A\) and \(C\)
are fixed. Assume that \(B_\varepsilon\) is uniformly bounded.
All \(O\)-estimates are in the operator norm as
\(\varepsilon\to0^+\).

There exists a unitary matrix \(W_\varepsilon=I+O(\varepsilon)\)
such that
\begin{equation}
 W_\varepsilon^\T M_\varepsilon W_\varepsilon=
 \begin{pmatrix}
 A+O(\varepsilon^2)&O(\varepsilon^2)\\
 O(\varepsilon^2)&\varepsilon C+O(\varepsilon^2)
 \end{pmatrix}.
 \label{eq:blocknormal}
\end{equation}

If the singular values are ordered nonincreasingly, then
\begin{align}
 s_j(M_\varepsilon)
 &=s_j(A)+O(\varepsilon^2),
 &&1\leq j\leq m,\\
 s_{m+k}(M_\varepsilon)
 &=\varepsilon s_k(C)+O(\varepsilon^2),
 &&1\leq k\leq n.
\end{align}

For fixed \(A\) and \(C\), the same constants may be used when
\(B_\varepsilon\) and the remainders in
Eq.~\eqref{eq:blockhyp} satisfy common bounds.
\end{lemma}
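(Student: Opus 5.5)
We construct \(W_\varepsilon\) as a first-order Schur-complement (block-elimination) correction, replace it by a nearby unitary, and then read off the singular values with Weyl-type perturbation inequalities.

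\textbf{Step 1: an elimination matrix.} Set
\begin{equation}
 G_\varepsilon=\begin{pmatrix} I & -\varepsilon A^{-1}B_\varepsilon\\ 0 & I\end{pmatrix}.
\end{equation}
A direct block multiplication, using \(A^\T=A\), gives for \(G_\varepsilon^\T M_\varepsilon G_\varepsilon\):
\begin{itemize}
\item upper-left block \(A+O(\varepsilon^2)\);
\item off-diagonal block \(\varepsilon B_\varepsilon-\varepsilon A A^{-1}B_\varepsilon+O(\varepsilon^2)=O(\varepsilon^2)\);
\item lower-right block \(\varepsilon C-\varepsilon^2 B_\varepsilon^\T A^{-1}B_\varepsilon+O(\varepsilon^2)=\varepsilon C+O(\varepsilon^2)\).
\end{itemize}
All remainders are controlled by \(\|A^{-1}\|\), \(\sup\|B_\varepsilon\|\) and the remainder constants in \eqref{eq:blockhyp}. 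The matrix \(G_\varepsilon\) is not unitary, however, so this congruence does not preserve singular values.

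\textbf{Step 2: unitarize.} Use the antisymmetric generator
\begin{equation}
 K_\varepsilon=\begin{pmatrix}0 & -A^{-1}B_\varepsilon\\ (A^{-1}B_\varepsilon)^\dagger & 0\end{pmatrix},\qquad W_\varepsilon=\exp(\varepsilon K_\varepsilon).
\end{equation}
Since \(K_\varepsilon\) is anti-Hermitian, \(W_\varepsilon\) is unitary, and \(W_\varepsilon=I+\varepsilon K_\varepsilon+O(\varepsilon^2)\). Write \(W_\varepsilon=G_\varepsilon+\varepsilon E+O(\varepsilon^2)\), where
\begin{equation}
 E=\begin{pmatrix}0&0\\ (A^{-1}B_\varepsilon)^\dagger&0\end{pmatrix}.
\end{equation}
Expanding \(W_\varepsilon^\T M_\varepsilon W_\varepsilon\) around \(G_\varepsilon^\T M_\varepsilon G_\varepsilon\), the first-order correction is
\begin{equation}
 \varepsilon\,(E^\T M_\varepsilon G_\varepsilon+G_\varepsilon^\T M_\varepsilon E).
\end{equation}
Since \(M_\varepsilon=\operatorname{diag}(A,0)+O(\varepsilon)\), this correction equals \(\varepsilon(E^\T \operatorname{diag}(A,0)+\operatorname{diag}(A,0)E)+O(\varepsilon^2)\).

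Here lies the main obstacle. The lower-left block of \(E\) multiplies the zero block, but the product \(\operatorname{diag}(A,0)E\) is \(\begin{pmatrix}0&0\\0&0\end{pmatrix}\) in the relevant entries only if one is careful. A direct check gives
\begin{equation}
 \operatorname{diag}(A,0)\,E=\begin{pmatrix}A\cdot 0 & 0\\ 0\cdot(\ast) & 0\end{pmatrix}=0,
\end{equation}
and \(E^\T\operatorname{diag}(A,0)\) has its only possibly nonzero block at position \((1,1)\). It is therefore \(O(\varepsilon)\) and lands in the \(A\)-block, where an \(O(\varepsilon)\) error is too large for the claim \(A+O(\varepsilon^2)\).

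To fix this, I would instead take \(W_\varepsilon=G_\varepsilon (G_\varepsilon^\dagger G_\varepsilon)^{-1/2}\), the polar unitary factor of \(G_\varepsilon\). Then
\begin{equation}
 G_\varepsilon^\dagger G_\varepsilon=I+\varepsilon^2(\cdots),
\end{equation}
which is positive definite for small \(\varepsilon\), so \(W_\varepsilon=G_\varepsilon(I+O(\varepsilon^2))\). Writing \(P_\varepsilon=(G_\varepsilon^\dagger G_\varepsilon)^{-1/2}=I+O(\varepsilon^2)\), we get
\begin{equation}
 W_\varepsilon^\T M_\varepsilon W_\varepsilon=P_\varepsilon^\T\,\bigl(G_\varepsilon^\T M_\varepsilon G_\varepsilon\bigr)\,P_\varepsilon=G_\varepsilon^\T M_\varepsilon G_\varepsilon+O(\varepsilon^2).
\end{equation}
This holds because \(\|G_\varepsilon^\T M_\varepsilon G_\varepsilon\|\) is bounded. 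Moreover \(W_\varepsilon=I+O(\varepsilon)\), which proves \eqref{eq:blocknormal}.

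\textbf{Step 3: singular values.} Unitary congruence preserves singular values, so \(s_j(M_\varepsilon)=s_j(W_\varepsilon^\T M_\varepsilon W_\varepsilon)\). By Weyl's inequality \(|s_j(X+Y)-s_j(X)|\le\|Y\|\), we may drop all \(O(\varepsilon^2)\) terms and work with
\begin{equation}
 \operatorname{diag}(A,\varepsilon C).
\end{equation}
The singular values of this block-diagonal matrix are the union of those of \(A\) and \(\varepsilon C\). Since \(A\) is nonsingular, \(s_m(A)>0\), so for small \(\varepsilon\) we have \(\varepsilon\|C\|<s_m(A)\). Hence the \(m\) largest singular values are exactly the \(s_j(A)\) and the remaining ones are the \(\varepsilon s_k(C)\), in order. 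This gives both asymptotic formulas.

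\textbf{Uniformity.} Every constant above depends only on \(\|A^{-1}\|\), \(s_m(A)\), \(\|C\|\), the bound on \(B_\varepsilon\), and the remainder bounds. This gives the final uniformity clause of Lemma~\ref{lem:block}.
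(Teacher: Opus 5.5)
\textbf{There is a genuine gap in Step 2, and it lies in the fix you adopt.} Your claim \(G_\varepsilon^\dagger G_\varepsilon=I+\varepsilon^2(\cdots)\) is false. Write \(G_\varepsilon=I+\varepsilon F_\varepsilon\), where \(F_\varepsilon\) has the single block \(-A^{-1}B_\varepsilon\) in position \((1,2)\). Then \(G_\varepsilon^\dagger G_\varepsilon=I+\varepsilon(F_\varepsilon+F_\varepsilon^\dagger)+\varepsilon^2F_\varepsilon^\dagger F_\varepsilon\), and \(F_\varepsilon+F_\varepsilon^\dagger\neq0\). Hence \((G_\varepsilon^\dagger G_\varepsilon)^{-1/2}=I-\tfrac{\varepsilon}{2}(F_\varepsilon+F_\varepsilon^\dagger)+O(\varepsilon^2)\). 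Since \(K_\varepsilon=F_\varepsilon-F_\varepsilon^\dagger\), the polar factor is \(W_\varepsilon=I+\tfrac{\varepsilon}{2}K_\varepsilon+O(\varepsilon^2)\), so it performs only half of the required rotation.

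Carrying this through, \(W_\varepsilon^\T M_\varepsilon W_\varepsilon\) keeps the off-diagonal block \(\tfrac{\varepsilon}{2}B_\varepsilon+O(\varepsilon^2)\). So \eqref{eq:blocknormal} fails for this \(W_\varepsilon\) unless \(B_\varepsilon=O(\varepsilon)\). A concrete check: take \(m=n=1\), \(A=1\), \(C=0\), \(B_\varepsilon=1\). The polar factor of \(G_\varepsilon\) is the rotation by \(\theta=\arctan(\varepsilon/2)\), and the rotated off-diagonal entry is \(\varepsilon\cos2\theta-\tfrac12\sin2\theta=\varepsilon/2+O(\varepsilon^2)\). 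With an \(O(\varepsilon)\) off-diagonal block, the Weyl argument in Step 3 gives only \(O(\varepsilon)\) errors. That says nothing about the small singular values \(\varepsilon s_k(C)+O(\varepsilon^2)\).

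\textbf{The construction you abandoned is the correct one.} It is exactly the paper's: your \(K_\varepsilon\) is the paper's \(H_\varepsilon\) with \(Y_\varepsilon=-A^{-1}B_\varepsilon\). The ``obstacle'' comes from a miscomputation. The only nonzero block of \(E^\T\) is \(\overline{A^{-1}B_\varepsilon}\) in position \((1,2)\). It multiplies the zero second block row of \(\operatorname{diag}(A,0)\), so \(E^\T\operatorname{diag}(A,0)=0\) exactly, just as \(\operatorname{diag}(A,0)E=0\).

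It follows that \(E^\T M_\varepsilon G_\varepsilon\) and \(G_\varepsilon^\T M_\varepsilon E\) are both \(O(\varepsilon)\), since only \(M_\varepsilon-\operatorname{diag}(A,0)=O(\varepsilon)\) contributes. Hence \(\exp(\varepsilon K_\varepsilon)^\T M_\varepsilon\exp(\varepsilon K_\varepsilon)=G_\varepsilon^\T M_\varepsilon G_\varepsilon+O(\varepsilon^2)\), and your Step 1 then gives \eqref{eq:blocknormal}.

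The paper reaches the same point directly, by computing \(H_\varepsilon^\T M_0+M_0H_\varepsilon\). Its diagonal blocks vanish, and its off-diagonal blocks are \(-B_\varepsilon\) and \(-B_\varepsilon^\T\), which cancel the \(O(\varepsilon)\) coupling. With this repair, your Steps 1 and 3 and the uniformity remark go through as written.
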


\begin{proof}
We first prove Eq.~\eqref{eq:blocknormal}.
Put \(Y_\varepsilon=-A^{-1}B_\varepsilon\) and define
\begin{equation}
 H_\varepsilon=
 \begin{pmatrix}
 0&Y_\varepsilon\\
 -Y_\varepsilon^\dagger&0
 \end{pmatrix},
 \qquad
 W_\varepsilon=\exp(\varepsilon H_\varepsilon).
 \label{eq:generator}
\end{equation}

The matrix \(H_\varepsilon\) is anti-Hermitian and uniformly
bounded. Hence \(W_\varepsilon\) is unitary and
\(W_\varepsilon=I+O(\varepsilon)\).

Let \(M_0=A\oplus0\). Since
\(M_\varepsilon-M_0=O(\varepsilon)\), expansion of the
matrix exponentials gives
\[
 W_\varepsilon^\T M_\varepsilon W_\varepsilon
 =
 M_\varepsilon+
 \varepsilon\bigl(
 H_\varepsilon^\T M_0+M_0H_\varepsilon
 \bigr)+O(\varepsilon^2).
\]
Here
\[
 H_\varepsilon^\T M_0+M_0H_\varepsilon
 =
 \begin{pmatrix}
 0&AY_\varepsilon\\
 Y_\varepsilon^\T A&0
 \end{pmatrix}
 =
 \begin{pmatrix}
 0&-B_\varepsilon\\
 -B_\varepsilon^\T&0
 \end{pmatrix},
\]
where the second equality follows from the definition of
\(Y_\varepsilon\) and the symmetry of \(A\).
The \(O(\varepsilon)\) terms in the off-diagonal blocks cancel.
There is no correction of this order in the diagonal blocks. 
This proves Eq.~\eqref{eq:blocknormal}.

Unitary congruence preserves singular values, and
\(\bigl\|W_\varepsilon^\T M_\varepsilon W_\varepsilon
-(A\oplus\varepsilon C)\bigr\|=O(\varepsilon^2)\).
The inequality
\(|s_j(P)-s_j(Q)|\leq\|P-Q\|\)
therefore implies
\(s_j(M_\varepsilon) =s_j(A\oplus\varepsilon C)+O(\varepsilon^2).\)
Since \(s_{\min}(A)>0\), we have
\(\varepsilon\|C\|<s_{\min}(A)\) for sufficiently small
\(\varepsilon>0\).
The first \(m\) singular values of \(A\oplus\varepsilon C\)
are those of \(A\), and the remaining \(n\) are
\(\varepsilon\) times those of \(C\). This remains true in the
presence of degeneracies within either block.

These bounds depend only on \(A\), \(C\), the bound on
\(B_\varepsilon\), and the remainders in Eq.~\eqref{eq:blockhyp}.
This also proves the uniformity assertion.
\end{proof}

\section{Nondegenerate supports: uniform stability}
\label{sec:regular}

Let \(\rho\) be an entangled two-qubit state such that
\(\beta|_S\) is nondegenerate, where \(S=\supp\rho\).
We show that Eq.~\eqref{eq:pointwiselip} holds in a sufficiently
small neighborhood of \(\rho\).

Choose an orthonormal basis adapted to \(S\oplus S^\perp\),
and let \(U=(U_S,U_{S^\perp})\) be the corresponding unitary
matrix. Then
\[
 U^\dagger\rho U=
 \begin{pmatrix}
 A_0&0\\
 0&0
 \end{pmatrix},
 \qquad A_0>0.
\]
For a state \(\sigma\), put \(\delta=\|\sigma-\rho\|_1\) and write
\[
 U^\dagger\sigma U=
 \begin{pmatrix}
 A&B\\
 B^\dagger&D
 \end{pmatrix}.
\]
The block compressions of \(\sigma-\rho\) satisfy
\(\|A-A_0\|\leq\delta,\qquad \|B\|\leq\delta,\qquad \|D\|\leq\delta.\)
Let \(a_0=\lambda_{\min}(A_0)>0\).
If \(\delta<a_0/2\), then \(A\geq(a_0/2)I\).
Thus \(A^{-1}\) and \(A^{-1/2}\) are uniformly bounded. All estimates
below are uniform for \(\sigma\) in a sufficiently
small neighborhood of \(\rho\).

Since \(\sigma\geq0\), the Schur complement
\(R=D-B^\dagger A^{-1}B\)
satisfies \(0\leq R\leq D\). Hence
\begin{equation}
 X_\sigma=
 \begin{pmatrix}
 A^{1/2}&0\\
 B^\dagger A^{-1/2}&R^{1/2}
 \end{pmatrix},
 \qquad
 X_\sigma X_\sigma^\dagger=U^\dagger\sigma U.
 \label{eq:schur}
\end{equation}

The common positive lower bound for \(A\) and \(A_0\) gives
\(A^{1/2}-A_0^{1/2}=O(\delta), \qquad B^\dagger A^{-1/2}=O(\delta).\)
Here we use the operator-Lipschitz continuity of the square-root map
on positive definite matrices bounded away from zero.
Also, \(0\leq R\leq D\) gives \(R^{1/2}=O(\sqrt{\delta})\).
Thus the first block column of \(X_\sigma\) differs from that of \(X_\rho=\operatorname{diag}(A_0^{1/2},0)\) by \(O(\delta)\), while the second block column is \(O(\sqrt{\delta})\).

By Eq.~\eqref{eq:takagimatrix}, the concurrence roots of
\(\sigma\) are the singular values of \(T_\sigma:=T(UX_\sigma)=X_\sigma^\T U^\T JU X_\sigma\). Hence
\begin{equation}
\begin{aligned}
 T_\sigma&=
 \begin{pmatrix}
 T_0+O(\delta)&O(\sqrt{\delta})\\
 O(\sqrt{\delta})&O(\delta)
 \end{pmatrix},\\
 T_0&=(A_0^{1/2})^\T J_{SS}A_0^{1/2},
 \qquad J_{SS}=U_S^\T JU_S.
\end{aligned}
 \label{eq:regblock}
\end{equation}

The matrix \(J_{SS}\) represents \(\beta|_S\) and is nonsingular by
assumption. Since \(A_0>0\), the matrix \(T_0\) is also nonsingular.

For \(r<4\), apply Lemma~\ref{lem:block} to
Eq.~\eqref{eq:regblock} with
\(\varepsilon=\sqrt{\delta}\) and \(C=0\).
The positive concurrence roots then change by \(O(\delta)\), while
the remaining roots are \(O(\delta)\). Hence there is a constant
\(K_\rho<+\infty\) such that
\begin{equation}
 |\lambda_j(\sigma)-\lambda_j(\rho)|
 \leq K_\rho\delta,
 \qquad j=1,\ldots,4.
 \label{eq:rootlip}
\end{equation}

For \(r=4\), Eq.~\eqref{eq:regblock} reduces to
\(T_\sigma=T_0+O(\delta)\), and Eq.~\eqref{eq:rootlip}
follows directly from the singular-value perturbation bound.

We now estimate the change in \(\EF\). The inequality
\(|\max\{0,x\}-\max\{0,y\}|\leq|x-y|\)
and Eqs.~\eqref{eq:concurrence} and \eqref{eq:rootlip} give
\[
 |\C(\sigma)-\C(\rho)|
 \leq\sum_{j=1}^4|\lambda_j(\sigma)-\lambda_j(\rho)|
 \leq4K_\rho\delta.
\]

Since \(\rho\) is entangled, \(\C(\rho)>0\).
The derivative of \(\mathcal E\) is bounded in a sufficiently small
neighborhood of \(\C(\rho)\) in \([0,1]\). This is also true at
\(\C(\rho)=1\), since \(\mathcal E'(1^-)=1/\ln2\). Therefore, for some finite constant \(M_\rho\),
\[
 \begin{aligned}
 |\EF(\sigma)-\EF(\rho)|
 &\leq M_\rho|\C(\sigma)-\C(\rho)|\\
 &\leq4M_\rho K_\rho\|\sigma-\rho\|_1
 \end{aligned}
\]
for all \(\sigma\) sufficiently close to \(\rho\).
This proves Eq.~\eqref{eq:pointwiselip}.

The bound \(0\leq\EF\leq1\) allows the constant to be enlarged so that
the one-sided estimate holds for all states. Therefore,
Eq.~\eqref{eq:lowerlip} holds. By the Holevo--Shirokov criterion,
a global supporting affine functional exists at \(\rho\).

\section{Tangent supports and the square-root cusp}

\subsection{Geometry of the degenerate cases}

We describe the supports \(S\) for which \(\beta|_S\) is degenerate,
assuming throughout that \(\rho\) is entangled. Recall that the
tangent plane to \(Q\) at a product ray \([z]\) is
\(\mathbb P(T_zQ)\), where
\begin{equation}
 T_zQ=\{y\in\mathcal H:\beta(z,y)=0\}.
 \label{eq:tangent}
\end{equation}
This follows by differentiating \(\beta(x,x)=0\) at \(z\).

\emph{Rank two.}
Assume that \(\beta|_S\) is degenerate. Its rank cannot be zero.
Otherwise every nonzero vector in \(S\)
would be a product vector, and every state supported on \(S\)
would be separable. Therefore, \(\beta|_S\) has rank one.

Take a nonzero \(z\) in the radical and complete it to a basis
\(\{z,u\}\) of \(S\). Then
\(\beta(z,z)=\beta(z,u)=0\), while \(\beta(u,u)\neq0\).
Consequently,
\(\beta(az+bu,az+bu)=b^2\beta(u,u).\)
It follows that \([z]\) is the only product ray in
\(\mathbb P(S)\), with intersection multiplicity two.
Thus \(\mathbb P(S)\) is tangent to \(Q\) at \([z]\)
and is not contained in \(Q\).

\emph{Rank three.}
Take a unit vector \(k\) spanning \(\ker\rho\); then
\(S=k^\perp\). From
\(\beta(z,y)=\langle\Theta z|y\rangle\),
we obtain
\(z\in\Rad(\beta|_S)\) if and only if
\(\Theta z\in\operatorname{span}\{k\}\).
Since \(\Theta^2=I\), we obtain
\(\Rad(\beta|_S)=S\cap\operatorname{span}\{\Theta k\}.\)
Hence \(\beta|_S\) is degenerate if and only if \(\Theta k\in S\),
or equivalently,
\begin{equation}
 \langle k|\Theta k\rangle=0
 \quad\Longleftrightarrow\quad
 \beta(k,k)=0.
 \label{eq:kernelproduct}
\end{equation}
The latter condition means that \(k\) is a product vector.
In this case, the radical is one-dimensional and is spanned
by the product vector \(z=\Theta k\). Moreover,
\(\beta(z,y)=\langle k|y\rangle\), which implies
\(S=T_zQ\). Thus \(\mathbb P(S)\) is the tangent plane
to \(Q\) at \([z]\).

\emph{Ranks one and four.}
If \(S=\operatorname{span}\{\psi\}\) and \(\rho\) is entangled,
then \(\beta(\psi,\psi)\neq0\), so \(\beta|_S\) is nondegenerate.
For rank four, \(S=\mathcal H\), and nondegeneracy follows
from the invertibility of \(J\).

\subsection{Transverse perturbation}

Let \(\rho\) be an entangled state for which \(\beta|_S\) is
degenerate. By the preceding classification, \(r=2\) or \(r=3\),
and \(\Rad(\beta|_S)\) is one-dimensional. By the
Holevo--Shirokov criterion, to prove the absence of a global
supporting affine functional it suffices to find a path
\(\rho_t\to\rho\) such that
\[
 \frac{\EF(\rho)-\EF(\rho_t)}{\|\rho_t-\rho\|_1}
 \longrightarrow+\infty.
\]

Choose a unit vector \(z\) spanning \(\Rad(\beta|_S)\),
and put \(w=\Theta z\). Since
\begin{equation}
 \langle\Theta z|y\rangle=\beta(z,y)=0
 \quad(y\in S),
 \label{eq:orthogonal}
\end{equation}
\(w\) is a unit vector in \(S^\perp\).
Both \(z\) and \(w\) are product vectors, so \(\beta(w,w)=0\).
Consider the states \(\rho_t=(1-t)\rho+t|w\rangle\langle w|\), \(0\leq t\leq1\). The orthogonality of the supports of \(\rho\) and
\(|w\rangle\langle w|\) gives
\begin{equation}
 \|\rho_t-\rho\|_1=2t.
 \label{eq:tracedistance}
\end{equation}

Write \(\rho_S=X_0X_0^\dagger\), where
\(X_0:\mathbb C^r\to S\) is invertible as a map onto \(S\),
and put \(T_0=X_0^\T JX_0\). Since \(X_0\) maps \(\ker T_0\) onto \(\Rad(\beta|_S)\),
the kernel of \(T_0\) is spanned by the unit vector
\begin{equation}
 a=\frac{X_0^{-1}z}{\|X_0^{-1}z\|},
 \qquad
 \|X_0^{-1}z\|^2
 =\langle z|\rho_S^{-1}|z\rangle.
 \label{eq:zeromode}
\end{equation}

Thus \(X_0a=\kappa_\rho z\), with \(\kappa_\rho\) defined
in Eq.~\eqref{eq:kappa}.
Complete \(a\) to a unitary matrix \(V=(V_+,a)\). Then
\begin{equation}
 V^\T T_0V=
 \begin{pmatrix}
 A&0\\
 0&0
 \end{pmatrix},
 \qquad
 A=A^\T\in\mathbb C^{(r-1)\times(r-1)},
 \label{eq:singularT0}
\end{equation}
where \(A\) is nonsingular, since \(\ker T_0\) is one-dimensional.

The choice \(w=\Theta z\) gives
\begin{equation}
 a^\T X_0^\T Jw
 =(X_0a)^\T J\Theta z
 =\kappa_\rho z^\T\overline z
 =\kappa_\rho.
 \label{eq:cross}
\end{equation}

Take \(X_t=(\sqrt{1-t}\,X_0V,\sqrt t\,w)\),
with a zero column added when \(r=2\). Then
\begin{equation}
 T(X_t)=
 \begin{pmatrix}
 (1-t)A&0&\varepsilon b\\
 0&0&\varepsilon\kappa_\rho\\
 \varepsilon b^\T&\varepsilon\kappa_\rho&0
 \end{pmatrix}
 \oplus 0_{3-r},
 \label{eq:singularblock}
\end{equation}
where \(\varepsilon=\sqrt{t(1-t)}\) and \(b=V_+^\T X_0^\T Jw\).
The last diagonal entry of the displayed matrix is \(t\beta(w,w)=0\).
The additional zero block is omitted when \(r=3\).

To apply Lemma~\ref{lem:block}, group the last two coordinates
of the displayed matrix together. The lower-right block is then
\(\varepsilon C_{\mathrm{eff}}\), where
\begin{equation}
 C_{\mathrm{eff}}=
 \begin{pmatrix}
 0&\kappa_\rho\\
 \kappa_\rho&0
 \end{pmatrix},
 \label{eq:effective}
\end{equation}
and the upper-right block is \(\varepsilon(0,b)\).
Moreover, \(t=O(\varepsilon^2)\) as \(t\to0^+\), so \((1-t)A=A+O(\varepsilon^2)\).
Hence Lemma~\ref{lem:block} applies.

Let \(\mu_1\geq\cdots\geq\mu_{r-1}>0\) be the singular values of \(A\), which are the positive concurrence
roots of \(\rho\). The two singular values of
\(C_{\mathrm{eff}}\) are both \(\kappa_\rho\). Therefore,
Lemma~\ref{lem:block}, together with \(\varepsilon=\sqrt t+O(t^{3/2})\), gives
\begin{align}
 \lambda_j(\rho_t)
 &=\mu_j+O(t),
 &&1\leq j\leq r-1,\\
 \lambda_j(\rho_t)
 &=\kappa_\rho\sqrt t+O(t),
 &&j=r,r+1.
\end{align}
For \(r=2\),
\(\lambda_4(\rho_t)=0.\)
For sufficiently small \(t\), the first \(r-1\) roots are
greater than \(\mu_{r-1}/2\), while the two new roots are
less than this number. Hence Eq.~\eqref{eq:concurrence} gives
\[
 \lambda_1(\rho_t)-\sum_{j=2}^{4}\lambda_j(\rho_t)
 =\C(\rho)-2\kappa_\rho\sqrt t+O(t).
\]
Since \(\C(\rho)>0\), the right-hand side is positive
for sufficiently small \(t\). This proves Eq.~\eqref{eq:cuspC}.

We now pass from concurrence to entanglement of formation.
As \(\rho\) is entangled and mixed, \(0<\C(\rho)<1\). Since \(\mathcal E\) is smooth on \((0,1)\),
Eq.~\eqref{eq:cuspC} gives
\[
\begin{aligned}
 \EF(\rho)-\EF(\rho_t)
 &=
 \mathcal E'(\C(\rho))
 \bigl[\C(\rho)-\C(\rho_t)\bigr]+O(t)\\
 &=
 2\mathcal E'(\C(\rho))\kappa_\rho\sqrt t+O(t),
\end{aligned}
\]
which is Eq.~\eqref{eq:cuspE}.
The remainder is \(O(t)\), since \(\C(\rho_t)-\C(\rho)=O(\sqrt t)\).

Combining this expansion with Eq.~\eqref{eq:tracedistance},
we obtain
\[
\begin{aligned}
 \frac{\EF(\rho)-\EF(\rho_t)}{\|\rho_t-\rho\|_1}
 &=
 \frac{\mathcal E'(\C(\rho))\kappa_\rho}{\sqrt t}+O(1)\\
 &\longrightarrow+\infty
 \qquad(t\to0^+),
\end{aligned}
\]
because \(\mathcal E'(\C(\rho))>0\) and \(\kappa_\rho>0\).
Hence Eq.~\eqref{eq:lowerlip} cannot hold with any finite constant.
By the Holevo--Shirokov criterion, no global supporting affine
functional exists at \(\rho\). This completes the converse part of
Theorem~\ref{thm:main}.

\section{Local-unitary invariance and examples}

By Eq.~\eqref{eq:kappa}, \(\kappa_\rho\) depends only on the state and
the radical ray. Let \(L=U_A\otimes U_B\), with
\(U_A,U_B\in U(2)\). The identity
\begin{equation}
 L^\T JL=\chi J,\qquad
 \chi=\det(U_A)\det(U_B),\qquad |\chi|=1
 \label{eq:LUform}
\end{equation}
implies \(\beta(Lx,Ly)=\chi\beta(x,y)\).
Thus, for \(\rho'=L\rho L^\dagger\) with support \(S'=LS\),
\(\Rad(\beta|_{S'})=L\Rad(\beta|_S).\)

Choose \(z'=Lz\). Since
\((\rho'_{S'})^{-1}=L\rho_S^{-1}L^\dagger,\)
we have
\[
 \kappa_{\rho'}
 =\bigl\langle Lz\big|L\rho_S^{-1}L^\dagger\big|Lz\bigr\rangle^{-1/2}
 =\kappa_\rho.
\]

Equation~\eqref{eq:LUform} and \(\Theta=JK\) give \(\Theta L=\overline{\chi}\,L\Theta\).
Hence \(w'=\Theta z'=\overline{\chi}\,Lw\) and therefore \(|w'\rangle\langle w'|=L|w\rangle\langle w|L^\dagger\).
It follows that
\[
 \rho'_t=(1-t)\rho'+t|w'\rangle\langle w'|
 =L\rho_tL^\dagger.
\]
Concurrence is invariant under local unitary transformations. Hence the
coefficient \(2\mathcal E'(\C(\rho))\kappa_\rho\) in Eq.~\eqref{eq:cuspE} is unchanged.

As a rank-two example, consider
\begin{equation}
 \rho_p=p|\Phi^+\rangle\langle\Phi^+|
 +(1-p)|01\rangle\langle01|,
 \qquad 0<p<1,
 \label{eq:example}
\end{equation}
where \(|\Phi^+\rangle=(|00\rangle+|11\rangle)/\sqrt2\).
At \(p=1/2\), this is the counterexample of Holevo and Shirokov
\cite{HolevoShirokov2026}. The radical is spanned by
\(z=|01\rangle\), so \(w=\Theta z=|10\rangle\) and \(\kappa_{\rho_p}=\sqrt{1-p}\). The concurrence is \(\C(\rho_p)=p\).

Adding a population \(t\) in \(|10\rangle\), let \(\rho_{p,t}=(1-t)\rho_p+t|10\rangle\langle10|\). For sufficiently small \(t\),
\begin{equation}
 \C(\rho_{p,t})
 =(1-t)p-2\sqrt{t(1-t)(1-p)}.
 \label{eq:exampleexact}
\end{equation}
Thus the coefficient of the \(\sqrt{t}\) term is
\(2\sqrt{1-p}\), in agreement with Eq.~\eqref{eq:cuspC}.

For rank three, consider an entangled state supported on \(S=\operatorname{span}\{|00\rangle,|01\rangle,|11\rangle\}\).
Its kernel is spanned by \(|10\rangle\), while the radical of
\(\beta|_S\) is spanned by \(|01\rangle\). Hence we take \(z=|01\rangle\) and \(w=|10\rangle\). Ordering the basis as \(\{|00\rangle,|11\rangle,|01\rangle\}\), write
\[
 \rho_S=
 \begin{pmatrix}
 G&v\\
 v^\dagger&d
 \end{pmatrix}>0,
\]
where \(G\) is a positive-definite \(2\times2\) matrix,
\(d\) is the population in \(|01\rangle\), and \(v\) contains
the coherences between \(|01\rangle\) and
\(|00\rangle,|11\rangle\). By block inversion,
\[
 \kappa_\rho^2
 =\bigl(\langle01|\rho_S^{-1}|01\rangle\bigr)^{-1}
 =d-v^\dagger G^{-1}v>0.
\]
Thus \(\kappa_\rho\) depends not only on the population \(d\) but also
on the coherences contained in \(v\). In particular, states with the
same support need not have the same value of \(\kappa_\rho\). The
leading coefficient for the decrease in the entanglement of formation is \(2\mathcal E'(\C(\rho))\kappa_\rho\).

\section{Conclusion}

We have shown that a global supporting affine functional for the
entanglement of formation exists exactly when the spin-flip bilinear
form is nondegenerate on the support of the state. The same condition
is expressed by the nonvanishing of the last concurrence root. For
nondegenerate support, the entanglement of formation is locally
Lipschitz continuous. If the restricted form is degenerate, suitable
perturbations outside the support give a square-root decrease of the
entanglement of formation. Thus the one-sided Lipschitz bound does not hold.

The degenerate supports have a simple geometric description. The projective
support is a tangent line to the product-state quadric at rank two and a
tangent plane at rank three. In the latter case, this is equivalent to the
kernel being spanned by a product vector. Entangled states
of rank one and rank four are nondegenerate. The support determines
whether the square-root behavior occurs. The coefficient, however,
depends on the state and is unchanged under local unitary
transformations.

The results show, in particular, that properties of the convex roof
restricted to a fixed support do not determine the behavior of the
entanglement of formation under perturbations outside this support.
These perturbations have to be taken into account when one asks
whether a local supporting functional can be extended to a global
supporting functional.

\end{document}